\newtheorem{prop}{Proposition}
\begin{document}
%
\title{Enabling Complexity-Performance Trade-Offs for\\Successive Cancellation Decoding of Polar Codes}
%
%
%

\author{\authorblockN{Alexios Balatsoukas-Stimming, Georgios Karakonstantis, and Andreas Burg}\\%
\authorblockA{Telecommunications Circuits Laboratory, EPFL, Lausanne, Switzerland.\\}%
Email: \{alexios.balatsoukas, georgios.karakonstantis, andreas.burg\}@epfl.ch}%
\maketitle

\begin{abstract}
Polar codes are one of the most recent advancements in coding theory and they have attracted significant interest. While they are provably capacity achieving over various channels, they have seen limited practical applications. Unfortunately, the successive nature of successive cancellation based decoders hinders fine-grained adaptation of the decoding complexity to design constraints and operating conditions. In this paper, we propose a systematic method for enabling complexity-performance trade-offs by constructing polar codes based on an optimization problem which minimizes the complexity under a suitably defined mutual information based performance constraint. Moreover, a low-complexity greedy algorithm is proposed in order to solve the optimization problem efficiently for very large code lengths.
\end{abstract}


%
\IEEEpeerreviewmaketitle

\section{Introduction}
%
%
%
%

Channel polarization gives rise to elegant and provably optimal channel codes, called \emph{polar codes}~\cite{Arikan2009}, whose decoding complexity under successive cancellation (SC) decoding is $O(N\log N)$, where $N$ is the blocklength of the code. In implementations of channel decoders it is often desirable to trade performance for complexity in order to meet system requirements at minimal cost. These trade-off decisions can be offline, i.e., taken during the design phase of the system, or online, i.e., taken by the system during its operation. A trivial way to vary the decoding complexity is to alter the blocklength of the code. This is usually an offline decision, although some recent communications standards (e.g., IEEE 802.11n \cite{IEEE802.11n}) require the support of codes of various lengths. Unfortunately, using this method, polar codes do not offer a fine trade-off granularity because their blocklength is constrained to powers of two when using the simple $2 \times 2$ polarizing matrix introduced by Ar{\i}kan. Thus, any complexity reduction may lead to a large loss in performance. Moreover, due to its successive nature, the SC decoding algorithm is not amenable to online complexity tuning. Other codes which are used in modern systems, such as LDPC~\cite{Gallager1963} and turbo codes \cite{Berrou1996}, which are usually decoded using iterative decoding algorithms, can be tuned online by varying the number of performed iterations, according to e.g., the channel conditions. Fortunately, a simple observation allows us to trade decoding complexity for performance for SC decoding in small steps, both offline and online, without the need to change the blocklength by altering the set of channels which are used to transmit information.

\subsubsection*{Contribution} In this work, we populate the complexity-performance trade-off curve for SC decoding by formulating the frozen channel selection step of polar code construction as an optimization problem. This is achieved by reformulating the original problem of polar code construction with the objective to minimize the complexity while respecting quality constraints that represent the various dynamically changing operating conditions, or the offline system constraints. The proposed reformulation enables complexity-performance trade-offs which where not evident before. Finally, we also present a low complexity greedy algorithm which seems to approximate the original problem reasonably well. 


\section{Polar Codes}\label{sec:background}
Following the notation of \cite{Arikan2009}, we use $a_1^N$ to denote a row vector $(a_1,\hdots,a_N)$ and $a_i^j$ to denote the subvector $(a_i,\hdots,a_j)$. If $j < i$, then the subvector $a_i^j$ is empty. We use $\log(\cdot)$ to denote the binary logarithm.
\subsection{Construction of Polar Codes}
Let $W$ denote a binary input discrete and memoryless channel with input $u~\in~\{0,1\}$, output $y~\in~\mathcal{Y}$, and transition probabilities $W(y|u)$. A polar code is constructed by  applying a $2 \times 2$ \emph{channel combining} transformation recursively on $W$ for $n$ times, followed by a \emph{channel splitting} step~\cite{Arikan2009}. This results in a set of $N = 2^n$ channels, denoted by $W_N^{(i)}(y_1^N,u_1^{i-1}|u_i),~i=1,\hdots,N$. In principle, it is possible to compute the mutual information values $I(Y_1^N,U_1^{i-1}|U_i),~i=1,\hdots,N$. In practice, finding an analytical expression turns out to be a very hard problem, except for the case of the binary erasure channel (BEC), where an exact recursive calculation is possible \cite{Arikan2009}. Methods for approximating the mutual information values in more general cases are described in \cite{Arikan2009,Pedarsani2011,Tal2013}. The construction of a polar code is completed by choosing the good channels as \emph{non-frozen} channels which carry information bits, while \emph{freezing} the remaining channels to some known values $u_i$. The set of frozen channel indices is denoted by $\mathcal{A}^c$ and the set of non-frozen channel indices is denoted by $\mathcal{A}$.

\subsection{Successive Cancellation Decoding of Polar Codes}
In the SC decoding algorithm~\cite{Arikan2009}, decoding starts by computing an estimate of $u_1$, denoted by $\hat{u}_1$, based only on $y_1^N$. Subsequently, $u_2$ is estimated using  $(y_1^N,\hat{u}_1),$ etc. Decisions are taken according to
\begin{align}
	\hat{u}_i & =\left\{ \begin{matrix} \arg \max _{u_i \in \{0,1\}} W_N^{(i)}(y_1^N,\hat{u}_1^{i-1}|u_i), & i \in \mathcal{A}, \\ u_i, & i \in \mathcal{A}^c. \end{matrix} \right. \label{eq:scdec}
\end{align}
The channel likelihoods $W(y_i|x_i),~x_i\in \{0,1\}$ are combined through the stages of a decoding graph in order to calculate  $W_N^{(i)}(y_1^N,\hat{u}_1^{i-1}|u_i),~i=1,\hdots,N$~\cite{Arikan2009}. The decoding graph contains $N \log N$ nodes. If intermediate results are stored, then each node has to be activated only once during decoding. Thus, we need exactly $N \log N$ node-computations per codeword.

\subsection{Complexity Reduction Through Pruning}\label{sec:simplified}
Complexity reduction can be achieved by pruning nodes from the decoding graph whose descendant nodes at stage $0$ all correspond to frozen channels, since these nodes calculate likelihoods that will never actually be used by the decoding rule~\cite{Alamdar-Yazdi2011}. For example, consider a rate-1/2 code corresponding to the decoding graph of Fig.~\ref{fig:grapha} and let $u_1, u_3 \in \mathcal{A}^c$ and $u_2, u_4 \in \mathcal{A}$. In this case, (\ref{eq:scdec}) does not require the likelihoods $W_N^{(1)}(y_1^N|u_1),~u_1=0,1,$ and $W_N^{(3)}(y_1^N,\hat{u}_1^{2}|u_3),~u_3=0,1,$ to estimate $u_1$ and $u_3$, respectively. So, the corresponding node-computations at stage $0$ can be pruned. However, the computations at stage $1$ can not be pruned, since their results are required to estimate $u_2$ and $u_4$. If, instead, $u_1,u_2$ were chosen as frozen, then the computations for $u_1$ and $u_2$ at stage 0, as well as the two preceding computations at stage $1$ could be pruned. In the first example we can prune two node-computations, while in the second example we can prune four node-computations. However, in the second case the error rate performance of the code will be worse since we do not allow the two best channels to carry information.

\section{Enabling Complexity-Performance Trade-Offs}\label{sec:opt}



\subsection{Complexity and Performance Metrics}
The total number of computations that can be saved by pruning the decoding graph, denoted by $c$, is used as a complexity metric. Let the blocklength $N$ and the rate $R = 1 - \frac{k}{N},~k\in\mathbb{N},~0 < k < N$ be fixed and let the mutual information values of the $N$ channels be denoted by $I_i,~i=1\hdots,N$. We use the sum mutual information of the set of non-frozen channels as a performance metric, i.e.,
\begin{equation}
	m = \sum _{i \in \mathcal{A}} I_i = N \cdot I(W) - \sum _{i \in \mathcal{A}^c} I_i. \label{eqn:perfmetric}
\end{equation}
Note that the polar code construction proposed in~\cite{Arikan2009} maximizes this metric under the constraint $|\mathcal{A}| = k$ and let $m_{\max}$ denote this maximum, i.e., 
\begin{equation}
	m_{\max} = \max _{\mathcal{A}: |\mathcal{A}| = k} \sum _{i \in \mathcal{A}} I_i.
\end{equation}
Since $I_i \geq 0,~0 \leq i \leq N,$ the maximization amounts to selecting the channel indices with the $k$ largest $I_i$ values.

\subsection{Optimization Problem Formulation}
From a complexity perspective, it is favorable to form clusters of $2^l,~l\in\mathbb{N}$, frozen channels in order to maximize pruning. In this section, we describe an optimization problem which constructs a polar code of rate $R$, in a way that maximizes $c$ while ensuring that $m$ is larger than a predefined performance constraint $m' \geq 0$. To this end, the indices of the $N$ channels are grouped into clusters of $1,2,\hdots,N$ consecutive channels as illustrated in Fig.~\ref{fig:grapha}, where the illustration of the groups has been spread across the stages of the data dependency graph to reduce congestion. Let the set of all the groups be denoted by $\mathcal{G}$. We have
\begin{align}
	|\mathcal{G}| = N\sum _{j=0}^n2^{-j} = 2N-1. \label{eqn:varnum}
\end{align}

\begin{figure}
	\centering
	\includegraphics[width=0.3\textwidth]{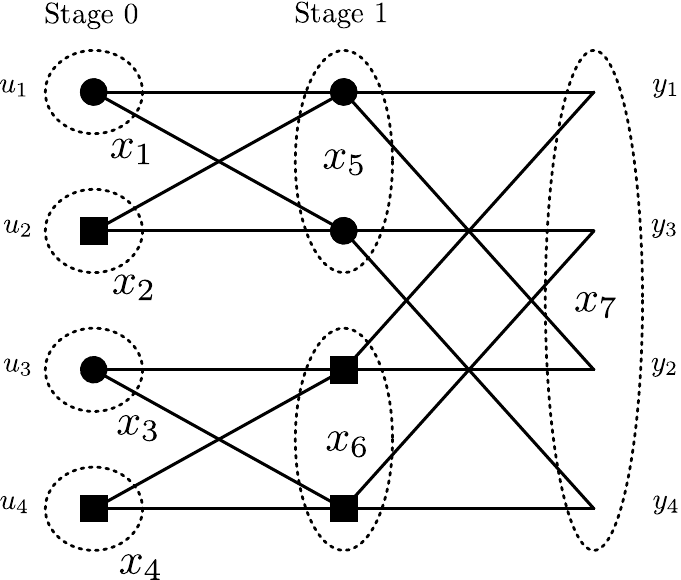}
	\caption{Decoding graph for $N=4$ with channel groups. An optimization variable $x_i$ is associated with each group $g_i$. Setting $x_i=1$ corresponds to freezing all channels in $g_i$.}
	\label{fig:grapha}
	\vspace{-0.3cm}
\end{figure}
\begin{figure}
	\centering
	\includegraphics[width=0.3\textwidth]{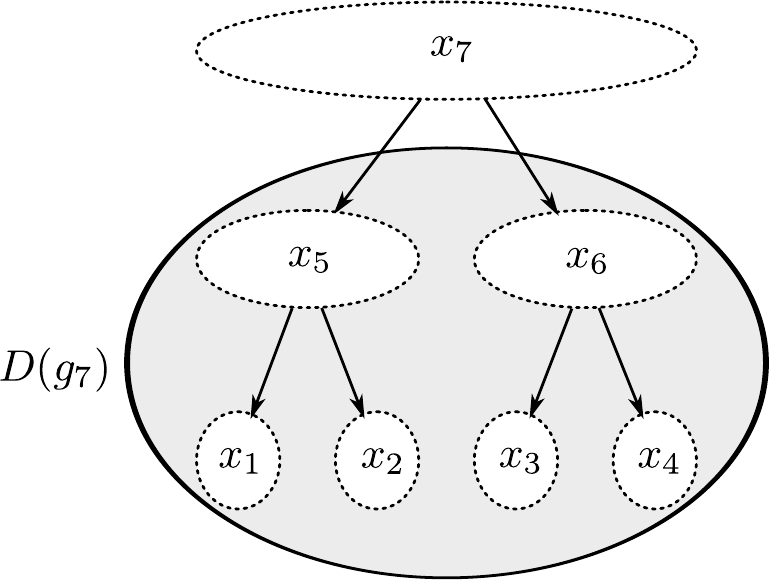}
	\caption{Tree structure of channel groups with descendants of $g_7$, i.e., $D(g_7)$, and their corresponding optimization variables. If $x_7=1$, then $x_i = 0$ has to be enforced for all $x_i: g_i \in D(g_7)$.}\label{fig:graphb}
	\vspace{-0.3cm}
\end{figure}
We associate each of the groups $g_i \in \mathcal{G}$ with a binary optimization variable $x_i,~i=1,\hdots,2N-1$. The assignment $x_i=1$ means that all channels contained in group $i$ are frozen. Each group also has a cost, denoted by $f_i,~i=1,\hdots,2N-1$. This cost is equal to the number of channel indices that are contained in $g_i$, i.e., $f_i = |g_i|$, and it reflects the rate loss incurred by setting $x_i = 1$. This leads to the \emph{rate constraint}
\begin{equation}
	\sum _{i=1}^{2N-1}f_ix_i = N - k. \label{eqn:size}
\end{equation}
Observe that, if in the example of Fig.~\ref{fig:grapha}, say, $x_7 = 1$, then the cost $f_7$ is paid. However, due to the tree structure of the groups, $f_7$ includes the costs for freezing the channels in groups $g_1$ to $g_6$, So, when $x_i = 1$ for any non-leaf group, $x_i = 0$ has to be enforced for all the descendants of this group in order not to count any costs more than once. Let the descendants of group $g_i \in \mathcal{G}$ be denoted by $D(g_i)$. An example is illustrated in Fig.~\ref{fig:graphb}. Let $\mathcal{M} = \{(i,j): g_i \in \mathcal{G}\backslash \{\text{leaves}\}, g_j\in D(g_i)\}$. Since $x_i \in \{0,1\}$, the \emph{mutual exclusiveness} constraint can be formalized as
\begin{align}
	x_i + x_j \leq 1,~\forall (i,j) \in \mathcal{M}. \label{eqn:mutex}
\end{align}
Moreover, we have
\begin{align}
	|\mathcal{M}| 	& = N\sum_{i=1}^{\log N-1} (\log N-i)2^{-i} = 2(\log N -1)N + 2. \label{eqn:mutexnum}
\end{align}
From (\ref{eqn:varnum}) and (\ref{eqn:mutexnum}), it can be seen that the number of variables grows linearly with the code length and the number of constraints in (\ref{eqn:mutex}) grows as $N \log N$. Each group $g_i \in \mathcal{G}$ has an associated gain in the number of computations, denoted by $c_i,~i=1,\hdots,2N-1$. This gain is the number of computations that is saved via pruning if all the channels in this group are frozen. Let $s(g_i) \in \{0,\hdots, \log N -1\}$ denote the stage to which group $g_i \in \mathcal{G}$ corresponds. For example, in Fig.~\ref{fig:grapha}, group $g_5$ corresponds to stage $1$. Then, we have
\begin{align}
	c_i	&  = (s(g_i)+1)2^{s(g_i)},\quad i=1,\hdots,2N-1. \label{eqn:complgain}
\end{align}
Due to (\ref{eqn:mutex}), no complexity gain is counted more than once. Finally, freezing the channels in group $g_i \in \mathcal{G}$ results in a loss in total mutual information, denoted by $m_i$, with 
\begin{equation}
	m_i = \sum_{j \in g_i} I_j,\quad i=1,\hdots,2N-1.
\end{equation}
again, due to (\ref{eqn:mutex}), no loss is counted more than once. A \emph{performance constraint} $m \geq m', ~m' \geq 0,$ is enforced, which can equivalently be written as
\begin{align}
	\sum _{i=1}^{2N-1} x_im_i  & \leq N \cdot I(W) - m'.
\end{align}
An optimization problem which maximizes the complexity gain, while ensuring that the resulting code has rate $R$ and satisfies the performance constraint, can be formulated~as
\begin{align}
	\text{maximize}	\quad	& \sum _{i=1}^{2N-1}c_ix_i \nonumber \\
	\text{subject to}\quad	& \sum _{i=1}^{2N-1}f_ix_i = N - k  \nonumber \\
							& \sum _{i=1}^{2N-1} x_im_i  \leq N \cdot I(W) - m' \label{prob:init} \\
							& x_i + x_j \leq 1,~\forall (i,j) \in \mathcal{M} \nonumber \\
							& x_i \in \{0,1\},\quad i = 1,\hdots, 2N-1 \nonumber
\end{align}
The above problem is a binary integer linear programming formulation of a multidimensional 0--1 knapsack problem~\cite{Freville2004}, which is known to be NP-hard. 
If $m'$ is chosen carefully so that $m' \leq m_{\max}$, then (\ref{prob:init}) is always feasible. Moreover, for $m' = m_{\max}$, the optimization problem reduces to the construction proposed by Ar{\i}kan,\footnote{Note that, in this case, the solution is not necessarily unique, but each solution of \eqref{prob:init} is also a solution of Ar{\i}kan's construction.} while $m' = 0$ results in a construction that maximizes the number of saved computations while completely disregarding performance. 
By varying $m'$ between these two extremal values, various complexity-performance trade-offs can be achieved.

\subsection{Results}

\begin{figure}
	\centering
	\includegraphics[width=0.4\textwidth]{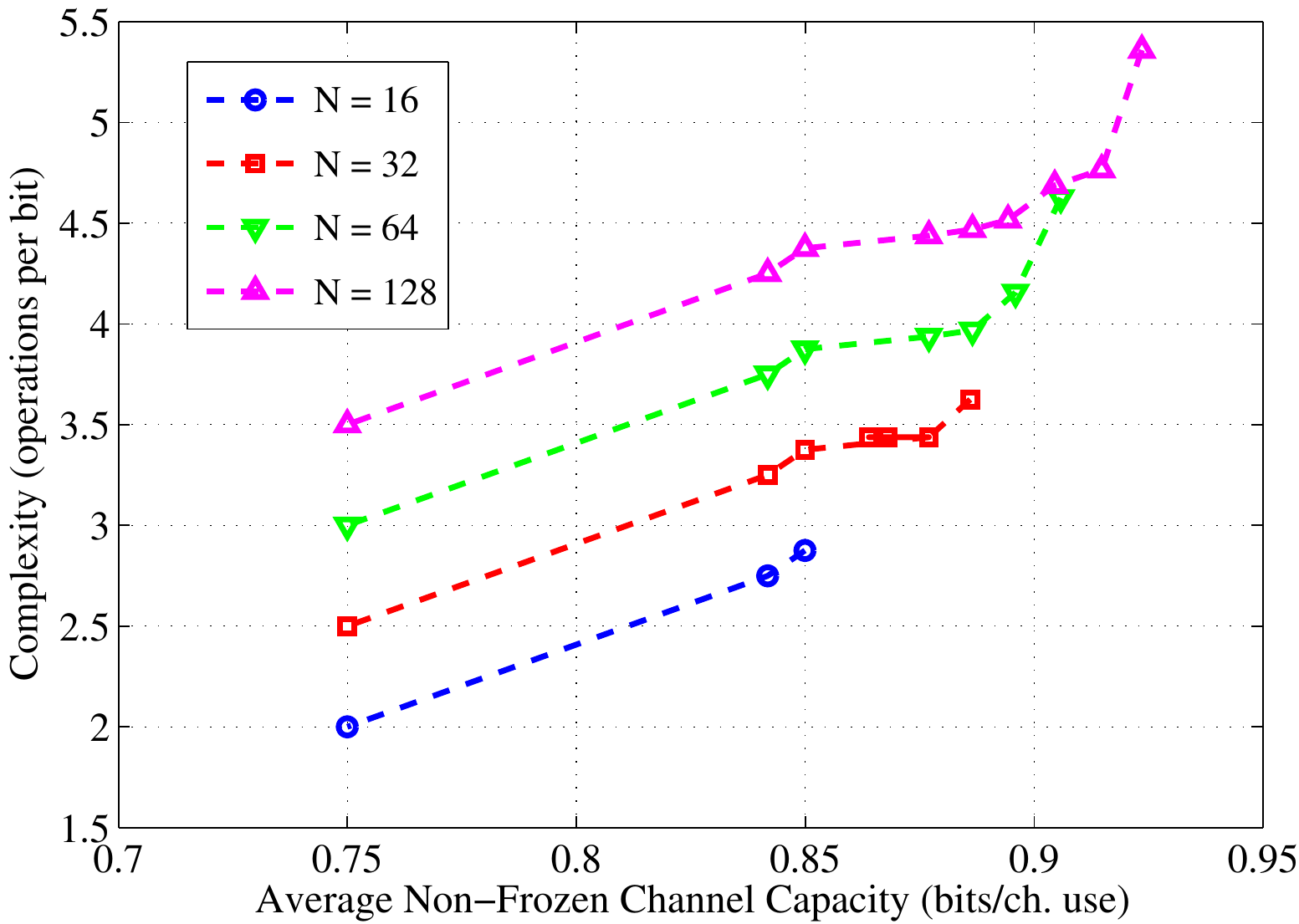}
	\caption{Solutions of \eqref{prob:init} for $R=0.5$, $N~=~2^n,~n=4,5,6,7,$ and transmission over a BEC$(0.5)$.}\label{fig:pareto}
	\vspace{-0.3cm}
\end{figure}

Even though \eqref{prob:init} is NP-hard, relatively small instances can still be solved by using standard branch-and-bound methods. For simplicity in calculating the mutual information values $I_i,~i=1,\hdots,N$, we present results only for the BEC$(\epsilon)$, where $\epsilon$ denotes the erasure probability. However, the proposed approach can be used for any other channel and input distribution, provided that $I_i,~i=1,\hdots,N$, are available. Moreover, given $I_i,~i=1,\hdots,N$, the complexity of \eqref{prob:init} and of the greedy algorithm presented in Section~\ref{sec:greedy} does not depend on the type of channel. We assume that the capacity achieving input distribution is used, so that $I(W) = 1 - \epsilon$. In Fig.~\ref{fig:pareto}, we present the solutions of (\ref{prob:init}) for transmission over a BEC$(0.5)$ with a polar code of rate $R = 0.5$ for $N=2^4,2^5,2^6,2^7$, which are obtained by solving the problem for various $0 \leq m' \leq m_{\max}$. We use the complexity in operations per bit on the vertical axis and the average mutual information on the horizontal axis. The former can be easily obtained from any solution $x^*$ as $\frac{1}{N}\left(N\log N - \sum _{i=1}^{2N-1}c_ix^*_i \right)$, while the latter is equal to $1 + \frac{1}{RN}\sum _{i=1}^{2N-1}m_ix^*_i$.


\section{Greedy Algorithm}\label{sec:greedy}
In order to solve \eqref{prob:init} for practically relevant blocklengths, like $2^{10} \leq N \leq 2^{20}$, in reasonable time, we present a greedy algorithm that takes advantage of the structure of the problem to provide useful solutions with negligible running time.

\subsection{Greedy Algorithm Description}
Our greedy algorithm consists of three steps, namely the \emph{greedy maximization} step, the \emph{feasibility} step, and the \emph{post-processing} step. In the first step, the goal is to greedily maximize the objective function while satisfying all inequality constraints. The second step ensures that the equality constraint is also satisfied, while the last step finalizes and improves the solution. Recall that $k' = N-k$ is the number of bits that need to be frozen. Let $k'_{\text{bin}}$ denote the $\log N$ bit right-MSB binary representation of $k'$ and let $k'_{\text{bin}}(j),~0 \leq j \leq \log N-1,$ denote the $j$-th bit of $k'_{\text{bin}}$. The greedy maximization step is inspired by the following observation.
\begin{prop}\label{prop:optimal}
If there were no performance constraint present in (\ref{prob:init}), the problem could be solved exactly as follows.
\begin{enumerate}
	\item[1.] Set $j = \log N-1$ and $x_i = 0,~1 \leq i \leq 2N-1$.
	\item[2.] If $k'_{\text{bin}}(j) = 1$, then set $x_i = 1$ for one $g_i: s(g_i) =  j$, denoted by $g_{i'}$, and set $x_i = 0$ for all remaining $g_i: s(g_i) =  j$. Remove all $x_i: g_i \in D(g_{i'})$ from the problem. 
	\item[3.] Set $j = j - 1$ and go to 2. until $j < 0$.
\end{enumerate}
\end{prop}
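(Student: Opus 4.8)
The plan is to recognize that, once the performance constraint is dropped, \eqref{prob:init} collapses into a purely combinatorial packing problem whose optimum is governed by the binary expansion of $k'$. First I would characterize the feasible solutions: a $0$--$1$ vector $x$ satisfies the mutual exclusiveness constraint \eqref{eqn:mutex} precisely when the selected groups $\{g_i : x_i = 1\}$ are pairwise incomparable in the group tree $\mathcal{G}$ (no selected group is a descendant of another), equivalently when they form disjoint dyadic blocks of channel indices. The rate constraint \eqref{eqn:size} then says these blocks contain $k'$ channels in total, and the objective equals $\sum_i c_i x_i = \sum_i (s(g_i)+1)\,2^{s(g_i)}$. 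A feasible solution is therefore exactly a decomposition of $k'$ into a sum of (not necessarily distinct) powers of two, realized by disjoint dyadic blocks, with value equal to the sum of the corresponding per-block gains.

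The heart of the argument is an exchange step carried out on the \emph{multiset of selected block sizes}, on which the objective depends only. Suppose such a multiset contains two blocks at the same stage $s$. Replacing the pair $2^s, 2^s$ by a single stage-$(s+1)$ block keeps the frozen count fixed ($2\cdot 2^s = 2^{s+1}$) and changes the objective by $(s+2)2^{s+1} - 2(s+1)2^s = 2^{s+1} > 0$, a strict improvement. Iterating this merge must terminate, since each step lowers the number of blocks by one while raising the objective; the terminal multiset has pairwise distinct sizes, and a set of distinct powers of two summing to $k'$ is unique, namely the binary expansion $k'_{\text{bin}}$. Hence $k'_{\text{bin}}$ maximizes the objective over \emph{all} multisets of powers of two summing to $k'$, and a fortiori over those arising from feasible solutions.

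To close the loop I would verify that this maximizer is itself feasible. The sizes prescribed by $k'_{\text{bin}}$ are distinct powers of two summing to $k' = N-k < N$, and blocks of distinct sizes can always be laid down disjointly among the $N$ leaves — largest first, each starting immediately after the previous one, with alignment preserved because every partial sum is a multiple of the next, smaller block size. This is exactly what Steps~2--3 do: at each stage $j$ with $k'_{\text{bin}}(j)=1$ a single group $g_{i'}$ with $s(g_{i'})=j$ is frozen, and deleting its descendants $D(g_{i'})$ enforces \eqref{eqn:mutex} against the smaller blocks chosen later. Because $k' < N$, the most significant set bit of $k'$ sits at position at most $\log N - 1$, so the root group is never required and the loop may start at $j = \log N - 1$.

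I expect the main obstacle to be making the exchange argument airtight rather than merely plausible: the trick is to phrase it on the multiset of block sizes, so that admissibility of a merge is automatic and no sibling/relocation bookkeeping is needed, and then to establish that the improving merges terminate precisely at the binary expansion — the unique decomposition of $k'$ into distinct powers of two — rather than at some other configuration. The intuition that the gain-per-channel $c_i/f_i = s(g_i)+1$ grows with the stage explains why large blocks are preferred, but it is uniqueness of the binary representation that converts this preference into a clean proof of optimality.
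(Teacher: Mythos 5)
Your proof is correct and follows essentially the same route as the paper: the same exchange argument (merging two same-stage frozen groups into one group at the next stage strictly increases the objective since $2(s+1)2^s < (s+2)2^{s+1}$), followed by the same appeal to the uniqueness of the binary representation of $k'$. Your version is in fact slightly tidier than the paper's, because by carrying out the exchange on the multiset of block sizes and separately verifying that the binary-expansion layout is realizable, you avoid the paper's implicit (and unjustified-as-stated) claim that two arbitrary same-stage groups can always be replaced in place by a single parent-stage group.
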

\begin{proof}
By eliminating all $x_i: g_i \in D(g_{i'})$ from the problem at step 2, we guarantee that the mutual exclusiveness constraint is not violated. Moreover, stage $\log N -1$ contains two groups, of which only one can be frozen, and for each group in stage $j$ there are two groups in stage $j-1$, so that step 2 can always be executed. We now show that any optimal solution must freeze at most one group per stage. Suppose that, for some solution, more than one groups were frozen in some stage $j$. Then, it is possible to replace any two frozen groups at stage $j$ with some frozen group at stage $j+1$ without violating any constraint. Based on \eqref{eqn:complgain}, for the complexity gains we have
\begin{equation}
	2 \cdot \left(j2^{j-1}\right) = j 2^{j} < (j+1) 2^{j},~\forall j \geq 0, \label{eqn:stagegain}
\end{equation} 
so this would strictly increase the objective function, meaning that the original solution could not have been optimal. Since all groups in stage $j$ contain $2^{(j-1)}$ bits and the binary representation of $k'$ is unique, it follows that the only way to freeze exactly $k'$ channels by freezing at most one group per stage, thus satisfying the rate constraint, is to freeze the groups according to the pattern dictated by $k'_{\text{bin}}$.
\end{proof}
\subsubsection{Greedy maximization step}
The greedy maximization step is different than the procedure of Proposition \ref{prop:optimal} in that it makes sure that the performance constraint is satisfied. In the following procedure, $k'_{\text{bin}}$ is again initialized to $\log N$ bit right-MSB binary representation of $k'$, but $k'_{\text{bin}}(j) \in \mathbb{N}$.
\begin{enumerate}
	\item[1.] Set $j = \log N-1$ and $x_i = 0,~1 \leq i \leq 2N-1$.
	\item[2.] If $k'_{\text{bin}}(j) \geq 1$, then try the following. 
		\begin{itemize}
			\item[2.1.] Find the $g_i: s(g_i) =  j$ with the smallest $m_i$ in stage $j$ and set $x_i = 1$. 
			\item[2.2.] If $\sum _{i} x_im_i  \leq N \cdot I(W) - m'$, then remove all $x_i: g_i \in D(g_{i'})$ from the problem, set $k'_{\text{bin}}(j) = k'_{\text{bin}}(j) - 1$, and go to 2.
			\item[2.3.] Else, set $k'_{\text{bin}}(j-1) = k'_{\text{bin}}(j-1) + 2$, set $x_i = 0$, and go to 3.
		\end{itemize}
	\item[3.] Set $j = j - 1$ and go to 2. until $j < 0$.
\end{enumerate}
At step 2.3., we set $k'_{\text{bin}}(j-1) = k'_{\text{bin}}(j-1) + 2$ because for each group that could not be frozen at stage $j$ due to the performance constraint, we need to freeze two groups at stage $(j-1)$ in order to (hopefully) satisfy the rate constraint. Unfortunately, there is no longer a guarantee that the procedure will be able to freeze exactly $k'$ bits as required to satisfy the rate constraint. However, the mutual exclusiveness and performance constraints are guaranteed to be met.

\begin{figure}
	\centering
	\includegraphics[width=0.42\textwidth]{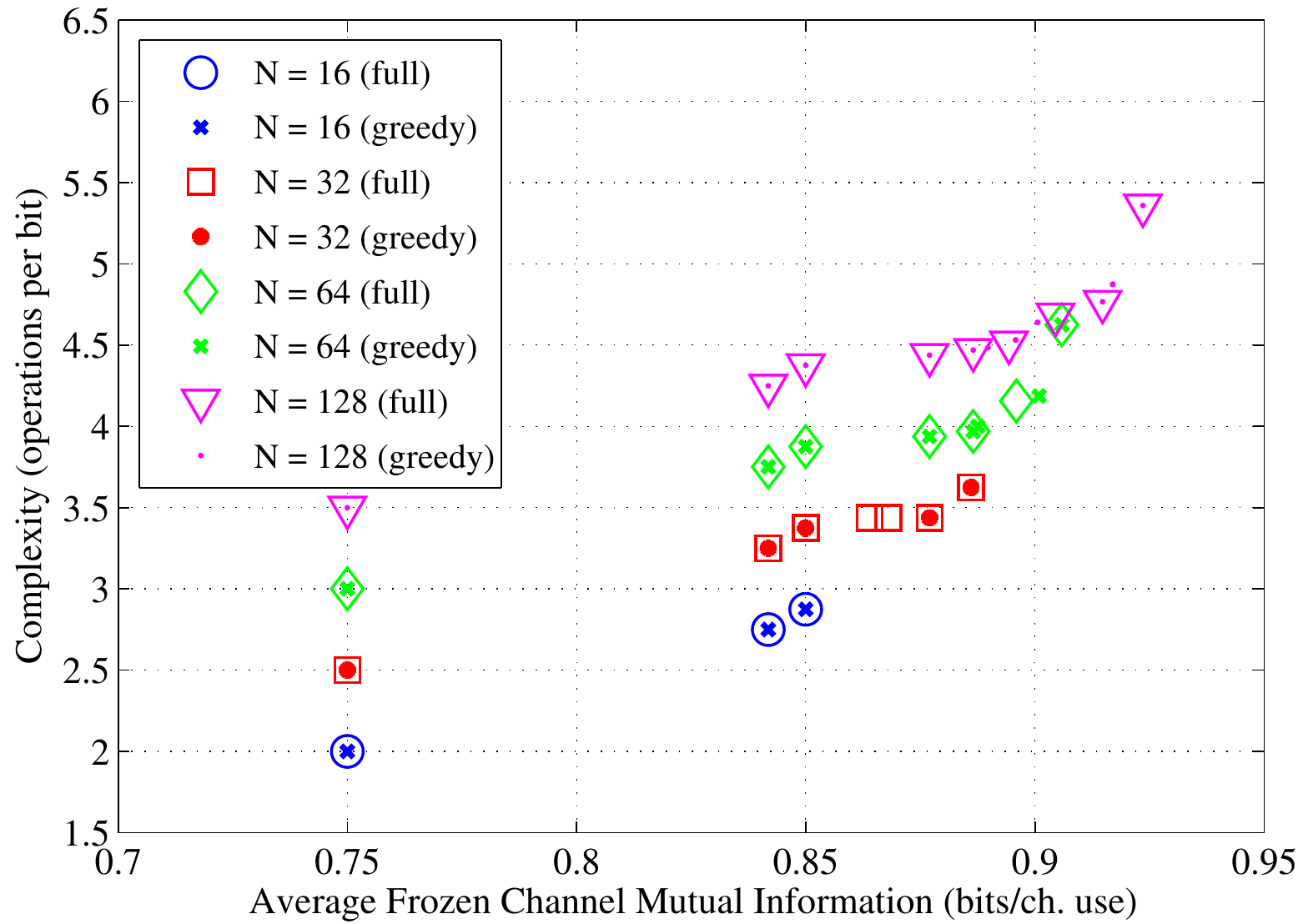}
	\caption{Results from exact solution of (\ref{prob:init}) and of the greedy algorithm for $R=0.5$, $N~=~2^n,~n=4,5,6,7,$ and transmission over a BEC$(0.5)$.}\label{fig:fullvsgreedy}
	\vspace{-0.25cm}
\end{figure}

\subsubsection{Feasibility step}
The second step of the algorithm sacrifices the objective function in a systematic step-by-step fashion until the solution is feasible, i.e., until the rate constraint is satisfied. Let $k''$ denote the number of additional bits that need to be frozen after the greedy maximization step is finished so that the rate constraint is satisfied, i.e., $k'' = k' - \sum _if_ix_i$. 

If $k'' > 0,$ then the feasibility step starts greedily unfreezing frozen groups to free up mutual information. More and more groups are unfrozen until the total number of unfrozen groups that can be frozen at stage $0$ is equal to $k''$ plus the number of variables in the groups that were unfrozen so far. Since during this step only groups at stage $0$ are refrozen which provide the smallest complexity gain, no direct effort is made to minimize the loss in the objective function. The feasibility step starts at stage $\lceil \log k'' \rceil + 1$, because by unfreezing a group in this stage it is possible to satisfy the rate constraint in a single step, thus making an indirect effort to minimize the objective function loss. Subsequently, all stages up to $\log N - 1$ are visited, and the procedure continues with stages $0$ to $\lceil \log k'' \rceil$, thus visiting all stages, if required. If $m' \leq m_{\max}$, the feasibility step is guaranteed to find a feasible solution.

\subsubsection{Post-processing step}
The post-processing step identifies pairs of consecutive frozen groups at each stage $j$ and replaces them with their parent group at stage $j+1$, which improves the objective function without violating any of the constraints.

\subsection{Results}
The solutions obtained by solving (\ref{prob:init}) exactly as well as by using the greedy algorithm for various constraints and blocklength up to $N = 2^7$ and for $R=0.50$ are compared in Fig.~\ref{fig:fullvsgreedy}. The greedy algorithm is able to find most of the optimal solutions for small instances of the problem. 

The solutions found by the greedy algorithm are presented in Fig.~\ref{fig:greedy} for various blocklengths and for $R=0.50$. For $N = 2^{20}$ the average running time of the greedy algorithm is less than $10^2$ seconds, which is negligible given that the optimization is carried out offline. We observe that the rightmost part of the curve is relatively steep, thus providing favorable trade-offs. For a fixed blocklength, the codes corresponding to some solution points can be chosen and stored in order to provide the system with online performance-complexity trade-offs. Moreover, during the design phase one can choose the solution with the largest performance among all blocklengths that satisfies a given complexity constraint.

\begin{figure}
	\centering
	\includegraphics[width=0.42\textwidth]{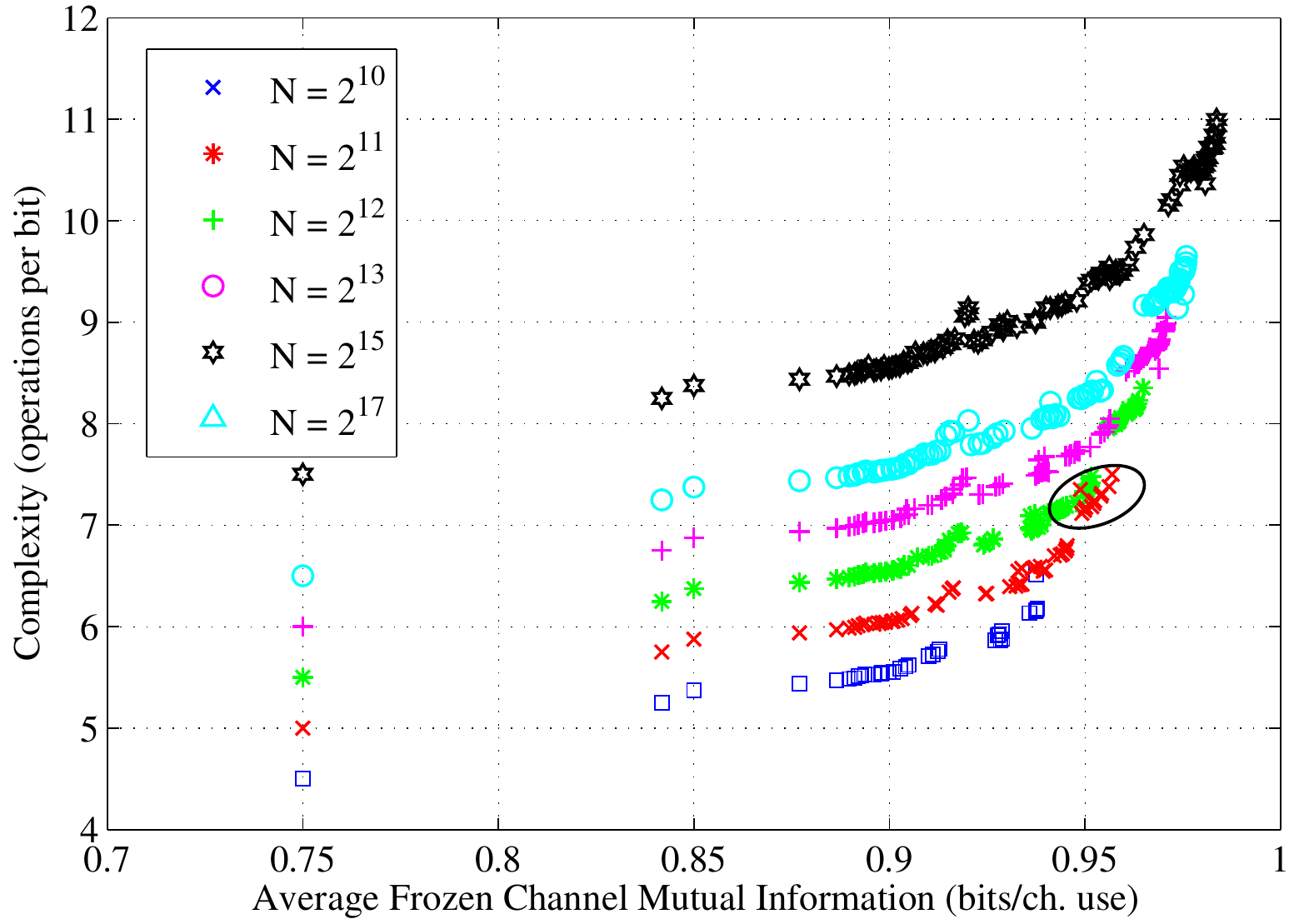}
	\caption{Solutions of greedy algorithm for $R=0.5$, $N~=~2^n,~n=9,10,11,12,13,15,$ over a BEC$(0.5)$. The circled codes for $N = 2^{10}$ are simulated in Fig.~\ref{fig:degradation}.}\label{fig:greedy}
	\vspace{-0.3cm}
\end{figure}

\section{Discussion}\label{sec:disc}

Our choice of performance metric requires some intuitive justification. Let $Z(W)$ denote the Bhattacharyya parameter of a channel $W$ and let $Z_i = Z(W^{(i)})$. It is known that $\sum _{i \in \mathcal{A}} Z_i$ is an upper bound on the probability of block error \cite{Arikan2009}. It was shown in \cite{Bastani2013} that, for the BEC, it is also a lower bound. Moreover, for the BEC we have $I_i = 1 - Z_i$, so by maximizing $\sum _{i \in \mathcal{A}}I_i$ one can minimize the probability of block error. Similarly, by placing a constraint on $\sum _{i \in \mathcal{A}}I_i$, we are implicitly placing a constraint on $\sum _{i \in \mathcal{A}}Z_i$, which is directly related with the probability of block error. So, for the case of the BEC, the metric that we use has an explicit relation with the probability of block error. For more general channels, one intuitively expects there to be at least an implicit relation between the two quantities. Ideally, one would like to use the probability of block error itself as a metric, but, to the best of our knowledge, this can not be described analytically as a function of $\mathcal{A}$, and especially not in a linear way which enables a simple formulation of the optimization problem.

Moreover, in principle, it is possible that a solution of \eqref{prob:init} contains a very bad channel in $\mathcal{A}$. This would lead to a catastrophic failure of the code, resulting in a block error rate (BLER) close to 1. This problem can be circumvented by adding the following additional constraints to \eqref{prob:init}
\begin{align}
	(1-x_i) \cdot h_i = 0,~i=1,\hdots,2N-1, \label{eqn:minqualconst}
\end{align}
where $h_i = 1$ if $g_i \in \mathcal{G}$ contains a channel with $I_i \leq m''$, where $m''$ is chosen as the lowest acceptable mutual information of the channels used for the information bits, and $h_i = 0$ otherwise. However, we have observed in simulations that the useful codes (a code is said to be \emph{useful} if it lies on the Pareto frontier of the set of obtained solutions) have performance which degrades gracefully with decreasing values of the performance metric. An example of this behavior for $N=2^{10}$ can be seen in Fig.~\ref{fig:degradation}, where code 1 corresponds to the construction in \cite{Arikan2009}, while codes 2 to 8 provide different performance-complexity trade-offs (annotated in Fig.~\ref{fig:greedy}).

\begin{figure}
	\centering
	\includegraphics[width=0.42\textwidth]{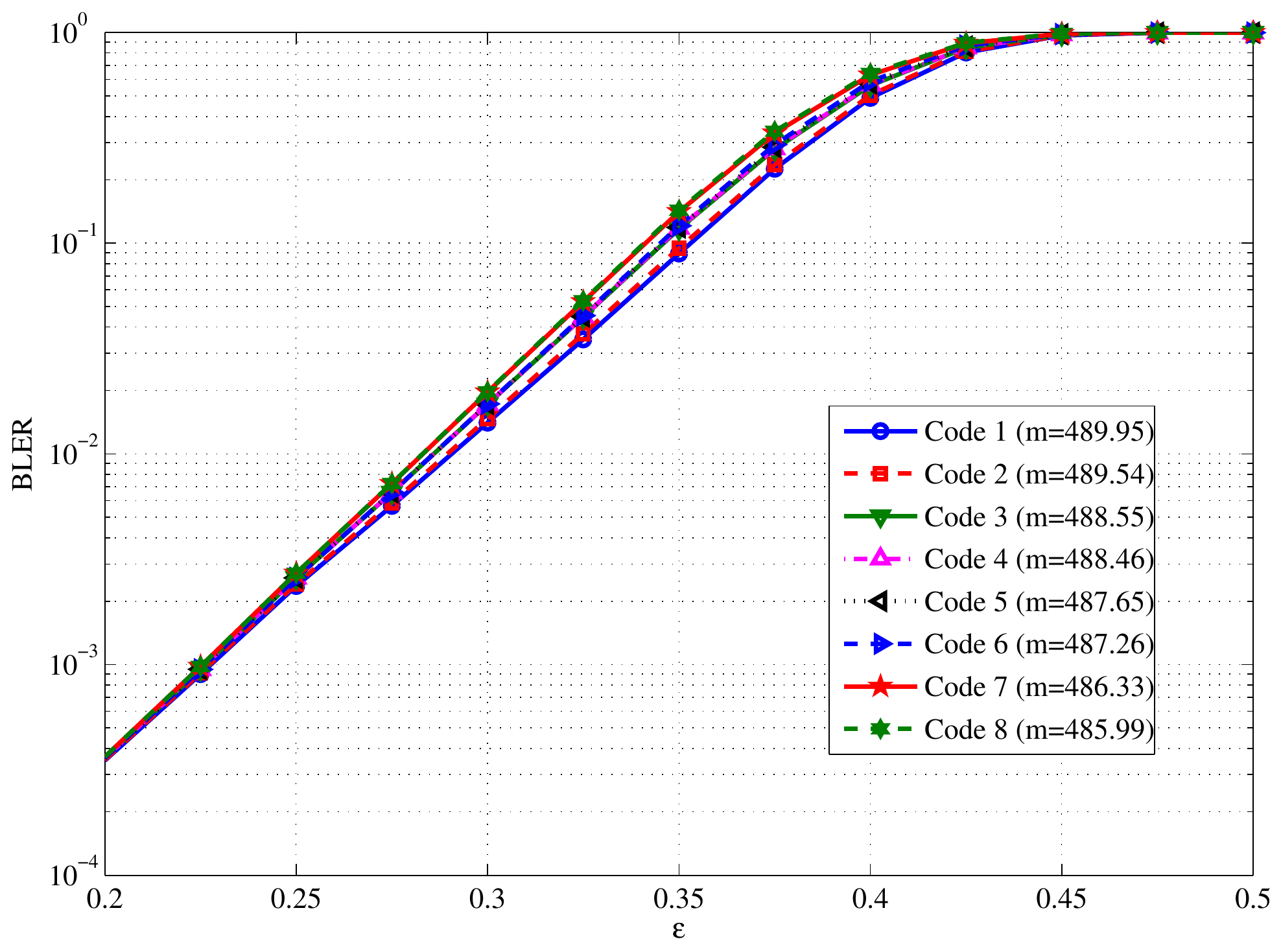}
	\caption{BLER performance and performance metric of the useful codes for $N = 2^{10}$.}\label{fig:degradation}
	\vspace{-0.3cm}
\end{figure}

\section{Conclusion}\label{sec:concl}
In this paper, we showed how to achieve fine-grained trade-offs between complexity and performance of SC decoding of polar codes by reformulating the frozen channel selection step of the standard polar code construction procedure as a 0-1 knapsack problem. Moreover, we described a low-complexity greedy algorithm, which is tailored to fit our specific knapsack problem instance. The greedy algorithm was used to approximately solve the optimization problem in order to construct polar codes of blocklength up to $N~=~2^{20}$.

\section*{Acknowledgment}
The authors would like to thank the anonymous reviewers for their helpful comments. This work was kindly supported by the Swiss NSF under Project ID 200021\_149447 and by the European Union under Marie Curie grant 304186.


\ifCLASSOPTIONcaptionsoff
  \newpage
\fi

\end{document}